\def\opt{\mbox{\footnotesize opt}}
\def\even{\mbox{\footnotesize even}}
\def\odd{\mbox{\footnotesize odd}}
\def\app{\mbox{\footnotesize app}}
\def\eE{{\mathbb E}}
\theoremstyle{remark}
\newtheorem{theorem}{Theorem}[section]
\newtheorem{prop}{Proposition}
\newtheorem{remark}{Remark}
\begin{document}
%
\title{Novel Near-Optimal Scalar Quantizers with Exponential Decay Rate and
Global Convergence}

\author{Vijay~Anavangot,~\IEEEmembership{Student Member,~IEEE,}
        and~Animesh~Kumar,~\IEEEmembership{Member,~IEEE}
}

%
\maketitle
\begin{abstract}
  Many modern distributed real-time signal sensing/monitoring systems require
  quantization for efficient signal representation. These distributed sensors
  often have inherent computational and energy limitations. Motivated by this
  concern, we propose a novel quantization scheme called approximate Lloyd-Max
  that is nearly-optimal. Assuming a continuous and finite support probability
  distribution of the source, we show that our quantizer converges to the
  classical Lloyd-Max quantizer with increasing bitrate. We also show that our
  approximate Lloyd-Max quantizer converges exponentially fast with the number
  of iterations. The proposed quantizer is modified to account for a relatively
  new quantization model which has envelope constraints, termed as the envelope
  quantizer. With suitable modifications we show optimality and convergence
  properties for the equivalent approximate envelope quantizer.  We illustrate
  our results using extensive simulations for  different finite support
  distributions on the source.  
\end{abstract}

\begin{IEEEkeywords}
  Quantization (signal), Piecewise linear approximation, Data models,
  Probability distribution, Computational efficiency
\end{IEEEkeywords}

%
\section{Introduction}
\IEEEPARstart{T}{he} widespread deployment of sensors for monitoring systems
(such as pollution, weather) will generate gigantic amounts of (discrete)
signals/data. For efficient storage and communication, data compression methods
such as quantization will play a vital role. In order to account for limited
computation and energy available at these sensor nodes (due to large scale of IoT
devices and mobile sampling), novel quantization algorithms will be
necessary.

Scalar quantization of a signal with known probability distribution
is studied in the well-known works of Lloyd and Max~\cite{Lloyd:1982,Max:1960}.
With the advent of distributed signal processing and in-network
computations~\cite{Predd:2006,Dimakis:2010,Vyavahare:2016} in large scale sensor
deployments, this (locally) optimal quantization scheme is infeasible due to
limited energy, bandwidth and computational power at the terminal sensor nodes.  
The classical Lloyd-Max algorithm requires integral computations in the centroid
(conditional mean) update step. In this work we introduce a nearly-optimal
scalar quantization algorithm, known as Approximate Lloyd-Max (ALM) that
bypasses these computationally complex operations.  We show exponentially fast
convergence of ALM to the near global optima for a class of source distributions
where Lloyd-Max quantizer is globally optimal. Our algorithm uses vectorized
update rules that are governed by linear transforms derived from localized
mean square error optimizations.

%
%
The approximate Lloyd-Max algorithm deals with the mean square error cost
function. We also consider a second cost function, known as envelope constrained
mean square error quantization or shortened as envelope quantizer, which is
relatively new and applicable to domains spanning, environmental monitoring,
protection region database for TV whitespace and others~\cite{Maheshwari:2015}.
The equivalent of ALM in this case is known as Approximate Envelope Quantizer
(AEQ).  We show that the AEQ scheme inherits all the properties of the ALM, by
suitable modifications to accomodate the envelope constraint.
%
%
The classical Lloyd-Max
algorithm is known to have global convergence to a unique local minima, 
under certain restrictive class of cost function and the probaility
distribution~\cite{Sabin:1986,Wu:1992}. Our algorithms give a generalized proof
method to establish the global convergence for the class of continuously
distributed  sources on a finite support. The convergence result is hinged on
the linear level update rule obtained as a result of the cost minimization in a
local neighborhood. The same proof methodology applies to AEQ, where the additional
envelope constraint is handled suitably through level shifting.

The main contributions of this work are summarized below.
\begin{enumerate}\itemsep1pt
  \item Linear approximation based quantization schemes - Approximate Lloyd-Max (ALM)
    and Approximate Envelope Quantizer (AEQ). A vectorized algorithm termed as
    Alternating Between Evens and Odds (ABEO) is proposed, which performs
    parallel computation of all the quantization levels in each iteration.
  \item Convergence of ALM and AEQ is established using linear matrix
    transformations and the Perron-Frobenius
    theory~\cite{PFTheory:2005,Gallager:2012}. We show that both the algorithms
    convege to a unique global minima, at exponential decay rate    convergence.
  \item Near optimality of the approximation based quantization scheme is
    analytically established.
  \item Simulations on source models with finite support are performed to
    characterize the error vs bitrate tradeoff. Experiments also confirms the
    near-optimality of the quantization schemes.
\end{enumerate}

\textit{Key ideas} : The application of function approximations for quantizer
error cost optimization forms the main theme of this work. The approximation
scheme proposed here, simultaneously satisfy computational simplicity and
accuracy of the quantization level computation. The level updates thus obtained,
is represented as a sequence of linear (matrix) transformations that satisfy
row-stochastic property. The structure of these matrices enables us to use ideas
from Perron-Frobenius theory to establish global convergence of the proposed
algorithms. The row-stochastic nature of the vectorized update rule also finds
connections in gossip algorithms and consensus models~\cite{Shah:2009,Huang:2012}.

\section{Related Literature}

Fixed rate optimal scalar quantization with known data distribution and mean
square error cost function, was first studied in the independent works by Lloyd
and Max~\cite{Lloyd:1982, Max:1960}. This iterative scheme, popularly known as
the Lloyd-Max algorithm, minimizes the mean square error using alternate update
of the the decision boundaries and the quantization levels.  Sharma extended the
Lloyd-Max method to a general class of (convex/semiconvex) distortion
measures~\cite{Sharma:1978}. This work employs a combination of dynamic
programming and fast search in order to iteratively update the quantization
levels. An algorithm for quantizer design considering vector data was first
introduced by Linde, Buzo and Gray~\cite{LBG:1980}. Quantizer design based on a
known probabilistic model or on a sequence of long training data is demonstrated
here. Vector quantization follows from an efficient extension of the Lloyd-Max
algorithm to higher dimensions. Another work related to vector quantization,
considers predictive quantizer design hinging on tree based search methods for
the class of Gauss-Markov sources~\cite{Gray:1982}. Ziv proposed a variable rate
universal quantizer for vector data, that achieves the optimal performance
within a constant gap~\cite{Ziv:1985}. Other variants of high dimensional
quantization using lattices and Voronoi tessellations are widely studied in
mathematical literature~\cite{Du:2006,Conway:1982, Conway:1998}. Gray and
Neuhoff have summarized the historical evolution of the quantization schemes,
both scalar and vector cases, in their comprehensive review
paper~\cite{Gray:1998}.  A practical approach to implement quantizers under
limited computational power and memory constraints are addressed by
Gersho~\cite{Gersho:1991}. In this respect, suboptimal and asymptotically
optimal (with respect to the number of quantization levels) schemes are
proposed.  The computational constraints discussed here differ from the envelope
constraints motivated in the current paper.  In the former case the constraints
are due to computational costs at higher dimension, while in this paper the
constraints arise from application specific design requirements.

The convergence analysis of the Lloyd-Max algorithm is widely studied in
literature. Convergence with exponential decay rate to a unique local minima,
under the assumptions of a convex cost function and a log-concave probability
distribution~\cite{Kieffer:1982}. The work by Sabin and Gray explains the
absolute convergence of the Lloyd algorithm and its empirical density
consistency on training data~\cite{Sabin:1986}.  The correspondence by Wu shows
the convergence of the Lloyd method I for continuous, positive density function
defined over a finite interval using the idea of finite state
machines~\cite{Wu:1992}. Another variant of this work explores dynamic
programming based global optimum search methods based on the monotony properties
of the error function~\cite{Wu:1993}.  The authors show a quadratic time
algorithm that converges to the global minima.  A more recent work provides a
linear time accelerated multigrid search algorithm that applies to both
continuous and discrete scalar probability densities~\cite{Koren:2005}. Several
works have studied the quantization problem in relation to the K-means
clustering framework~\cite{Pollard:1982,Bottou:1995,Kanungo:2002}. Pollard
introduced a novel approach to show the consistency theorem for k-means cluster
centers and its relation on data centric quantization. Bottou and Bengio
established the optimality of the K-means algorithm using gradient descent and
fast Newton algorithm~\cite{Bottou:1995}.  High dimensional Voronoi tessellation
under generalized assumptions, such as compact support are covered in some
relatively recent works~\cite{Du:2006,Emelianenko:2008}.

Our contributions differ from the previous literature in the following aspects.
We show an exponentially fast, globally optimal convergence result for the class
of finitely supported quantizers. Also the proposed algorithm is not restricted
to the class of log concave (unimodal) distributions, as assumed in many of the
earlier works. Our method is analytically and computationally efficient, as it
involves only the use of a sequence of linear transformations and convergence
based on the Perron Frobenius theory. The proposed scheme extends to more
generic cost functions and optimization constraints. This is possible since the
update rules are based on local optimization of the cost function.

\textit{Notations} : We use the notation, $X$ to represent a random
variable and $x$ to denote the realization of the random variable. $f_X(x)$
represents the density function of $X$. For ease of exposition, we use the
phrase "quantization for continuous distribution" to indicate "quantization for
continuously distributed sources".

\textit{Organization} :  
We first develop the cost function, optimality criteria and update rule
corresponding to nearly optimal quantizers, ALM and AEQ (see
\ref{sec:opt_env_quant}). In Section \ref{sec:conv_env_quant}, the main
analytical result relating to the global optimality and exponential convergence
of the two proposed algorithms is presented. Section~\ref{sec:sim_results}
discusses the simulation experiments and convergence trends of our algorithms.
%
%
\section{Nearly Optimal Quantizers for Continuous Probability
Distribution}\label{sec:opt_env_quant}
This section covers the system assumptions, cost formulation, and quantizer level
updates rules for the two nearly optimal quantization algorithms proposed, viz.
Approximate Lloyd Max (ALM) and the Approximate Envelope Quantizer (AEQ). The
cost function for ALM is based on the mean square error, while the cost function
for AEQ is the envelope constraint mean square error. The two cost functions are
iteratively minimized in a local neighborhood of the existing quantization
levels, so as to obtain a vectorized update policy (see
Sec.~\ref{subsec:ALM_opt_lev}, Sec.~\ref{subsec:AEQ_opt_lev}).
Sec.~\ref{subsec:unif_quant} provides insights into the proposed approximate
quantization scheme using the example of uniformly distributed sources. We
characterize the \emph{closeness} of the proposed approximate schemes to their respective
true (non-approximated) counterparts in an asymptotic sense, later in this
section (see
Sec.~\ref{subsec:asymp_opt}).

\subsection{Assumptions on the source distribution}
Let the data observations be generated from a known continuous probability
distribution $f_X(x)$. Without loss of generality, we assume that $f_X(x)$ is
supported on a unit interval, $[0,1]$ \footnote{For all practical data
distributions, the support is an interval $[a,b];\;\; a,b \in \mathbb{R}$.
Using scale and shift operations it can be transferred to $[0,1]$.}. The
following \emph{smoothness} criterion is assumed over the slope of the distribution
function:
\begin{align}
  \left|\frac{d}{d x} f_X(x)\right| < m \quad \text{ where } m \in (0,
\infty).\label{eq:slope_condition} 
\end{align}
The above condition states that the density function $f_X(x)$ has a slope
bounded by $m$. A scalar quantizer is defined by a map $Q: [0,1] \rightarrow
\{q_1,q_2,\cdots,q_K\}$. The image set of the map, $\{q_1,\cdots, q_K\}$
represent the discrete quantization levels. For elucidating our quantization
algorithm, we introduce two reference levels, $q_0:=0$ and $q_{K'}:=1$, which are
fixed at the endpoints of the interval $[0,1]$. In addition, we will assume the
following ordering, $q_0 \leq q_1 \leq \cdots \leq q_K \leq q_{K'}$ on the
quantization levels. The parameter $K$ above, is a fixed positive integer
denoting the number of quantization levels. In this paper the condition $K \gg
1$ is assumed. The subsequent discussion will elaborate on the cost functions and
their minimization techniques, that lead us to the two algorithms proposed in this
paper - ALM and AEQ.

\subsection{ALM cost minimization and level updates} \label{subsec:ALM_opt_lev}
The classical Lloyd-Max quantizer minimizes the mean square error (MSE) cost
function, by alternating the updates between the quantization level set $\{q_i; 1 \leq i
\leq K\}$ and the boundary level set $\{b_{j}; 1 \leq j \leq K+1\}$. The
above minimization can be equivalently performed by minimizing a collection of
local cost functions in the nearest left and right neighborhood of each
quantization level. This alternative approach is used to develop the ALM
algorithm.  The following cost function decomposition illustrates our approach:
\begin{align}
  \mathcal{R}(Q) :&= \eE\left[Q(X)-X\right]^2 \nonumber \\
  & = \int_0^1 \left(Q(x) -x\right)^2 f_X(x) \mbox{d}x \nonumber \\
  & = \sum_{k=1}^{K} \int_{b_{k}}^{b_{k+1}} \left(q_k-x\right)^2 f_X(x)\mbox{d}x
  \label{eq:ALM_cost}
\end{align}

In the above equation, the boundary set $\{b_j; 1 \leq j \leq
K+1\}$ defined as,
\begin{align*}
  b_{j+1} &= \frac{q_j + q_{j+1}}{2} \quad j =1,2,\cdots,K-1,\\
  b_1 &:= q_0 \quad \text{ and } \quad b_{K+1} := q_{K'}.
\end{align*}

We perform minimization of \eqref{eq:ALM_cost} by taking partial derivatives
with respect to each level in the set $\{q_i; 1 \leq i \leq K\}$. Using Leibniz
rule of differentiation under the integral sign we get the following condition
for the optimal levels \cite{Protter:2012}:
\begin{align}
  0  =  2 \int_{b_k}^{b_{k+1}} (q_k-x) f_X(x) \mbox{d}x,
  \label{eq:ALM_opt_cond}
\end{align}
for $1 \leq k \leq K$. The equation above, in general, does not give rise to a closed
form expression for $q_k$. Thus, we apply a piecewise linear approximation of
the density function to determine the approximate solution of
\eqref{eq:ALM_opt_cond}. 

\subsubsection{Optimal levels for approximate density}
\label{subsec:ALM_approx_den}
We consider the first order approximation of the density function in between the
nearest neighbor quantization levels. That is, 
\begin{align}
  f_{\app}(x) = m_k x+ c_k, \quad \text{ for } x \in [q_{k-1},q_{k+1}]
  \label{eq:lin_app}
\end{align}
where $m_k$ and $c_k$ corresponds to the slope and the intercept of the
approximation. These parameters are determined using the end point conditions
$f_{\app}(q_{k-1}) = f_X(q_{k-1})$ and $f_{\app}(q_{k+1}) = f_X(q_{k+1})$. The
linear approximation described above helps us to obtain a computable expression
for the optimal $q_k$ in \eqref{eq:ALM_opt_cond}. On replacing the density
function $f_X(x)$ by its approximation $f_{\app}(x)$ in the optimality condition (in
\eqref{eq:ALM_opt_cond}), a cubic equation, $r(u) = r_0 + r_1 u + r_2 u^2 + r_3
u^3$ is obtained, which has a real root in the interval $[q_{k-1},q_{k+1}]$
(See Appendix~\ref{subsec:ALM_exist_roots} for the proof). For $2 \leq k \leq K-1$, the
equation becomes quadratic as $r_3 =0$. The coefficients $r_0,r_1, r_2$ and
$r_3$ are tabulated for the different quantization levels in
Table.~\ref{tab:ALM_solve_cubic} 
\begin{table*}
  \centering
  \caption{\label{tab:ALM_solve_cubic}Coefficients of the cubic polynomial
  equation $r(u)= r_0 +r_1 u+r_2 u^2 + r_3 u^3$, to determine optimal
  level updates of ALM}
  \bgroup \def\arraystretch{1.5} \begin{tabular}{l|c|c|c}
    \toprule
     Coeff. & $k=1$ & $2 \leq k \leq K-1$ & $k = K$ \\[1.2ex]
    \midrule
    $r_0$ & $\begin{aligned} \frac{m_1}{3}\left(q_0^3-\frac{q_2^3}{8}\right) +
    \frac{c_1}{2}\left(q_0^2-\frac{q_2^2}{4}\right)\end{aligned}$ &
    $\begin{aligned}-\frac{m_k}{24}\left(q_{k+1}^3-q_{k-1}^3\right)-\frac{c_k}{8}\left(q_{k+1}^2
    -q_{k-1}^2\right)\end{aligned}$ & $\begin{aligned}
      \frac{m_K}{3}\left(\frac{q_{K-1}^3}{8}-q_{K'}^3\right) +
    \frac{c_K}{2}\left(\frac{q_{K-1}^2}{4}-q_{K'}^2\right) \end{aligned}$  \\[3ex] 
    $r_1$ & $\begin{aligned}-\frac{m_1}{2}q_0^2 + \frac{c_1}{4}q_2 -c_1q_0\end{aligned}$ &
    $\begin{aligned}\frac{c_k}{4}\left(q_{k+1}
      -q_{k-1}\right)\end{aligned}$ & $\begin{aligned}\frac{m_K}{2}q_{K'}^2 -
    \frac{c_K}{4}q_{K-1} +c_Kq_{K'}\end{aligned}$ \\[3ex]
    $r_2$ & $\begin{aligned}\frac{1}{8} m_1 q_2 + \frac{3}{8} c_1\end{aligned}$ &
    $\begin{aligned}\frac{m_k}{8}\left(q_{k+1}-q_{k-1}\right)\end{aligned}$ &
      $\begin{aligned}-\frac{1}{8} m_K q_{K-1} - \frac{3}{8} c_K\end{aligned}$  \\[3ex]
      $r_3$ & $\frac{1}{12}m_1$ & 0 & $-\frac{1}{12}m_K$ \\
      \bottomrule
  \end{tabular} \egroup
\end{table*}
\subsubsection{Insights into Approximate Lloyd-Max (ALM) quantization of uniformly
distributed sources} \label{subsec:unif_quant}
Let us consider a continuous source having a uniform distribution in the interval
$[0,1]$.  The optimal mean square error (MSE) quantizer for uniform distribution is
trivially obtained when the levels are fixed at equispaced locations on
the unit interval. However, we draw useful insights on the ALM algorithm when the
levels are initialized to random points. Starting from an initialized
quantization vector $q^{\;(0)}= \left[q_1^{\;(0)},\cdots,q_K^{\;(0)}\right]$, the ALM
algorithm minimizes the local cost function in the neighborhood interval,
$[q_{k-1},q_{k+1}]$ of each level $q_k$.  This results in a level update that
consistently reduces the overall MSE. In this specific example of sources with
uniform distribution, it is seen that the piecewise linear approximation
exactly represents the true distribution. The ALM cost minimization hence
follows the updates given as (see
Section.\ref{subsec:ALM_opt_lev},\ref{subsec:ALM_approx_den}),
\begin{align} 
  b_{i+1} &= \frac{q_{i}+q_{i+1}}{2} \quad \text{ for } i=1,2,\cdots,
  K-1 \nonumber \\ q_i &= \frac{b_{i-1} + b_i}{2} \quad \text{ for } i =
  1,2,\cdots,K.  \label{eq:LM_update}
\end{align}
It is noted that the same updates are obtained for the Lloyd-Max algorithm.
However, for general continuous distributions, the Lloyd-Max algorithm incur
additional computational expense due the evaluation of an integral for the
centroid update. Also, showing the global optimality of the Lloyd-Max
algorithm is analytically cumbersome for a general class of probability
distributions.  It will be later shown that the proposed algorithm is efficient
than the conventional Lloyd-Max update, since ALM gives a vectorized
rule using a series of matrix products. The motivation for our vectorized method
is derived from the \textit{even odd algorithm} due to Maheshwari and Kumar~\cite{Maheshwari:2015}. The original method although proposed for the data
driven quantizer design, extends to the model driven case that is considered in
this paper. The even odd algorithm updates the quantization levels in two steps.
In the first step, we modify the values of the odd set of levels,
$\{q_1,q_3,\cdots, q_{2l+1}\}$ while keeping the even indexed levels, viz.
$\{q_2,q_4,\cdots,q_{2m}\}$, fixed. This is followed by updating the even set,
fixing the odd set. This method speeds up the computation in each iteration of
the algorithm, as there are two sets of parallel update. Another advantage of
the even odd algorithm is its analytical simplicity. In each level
modification step, only the local neighbor points are considered. The above two
properties of the even odd algorithm, helps us to represent the overall vector
update  as a matrix transform. For example, consider the case where $K=3$.  Let
$\vec{q}^{\;(0)}$ represent the (randomly) initialized quantization levels.
Then the modified level after first iteration,
\begin{align}
  & \qquad\qquad\qquad\qquad \vec{q}^{\;(1)} = P_2 P_1 \vec{q}^{\;(0)},
  \label{eq:unif_mat_update} \\
  &\text{ where } \nonumber \\ 
  & P_1 = \begin{pmatrix}1 & 0 & 0 & 0 & 0 \\ \frac23 & 0 & \frac13 & 0 & 0
	\\ 0 & 0 & 1 & 0 & 0 \\ 0 & 0 & \frac13 & 0 & \frac23 \\ 0 & 0 & 0 & 0 &
	1\end{pmatrix}, \quad  
	P_2 = \begin{pmatrix}1 & 0 & 0 & 0 & 0 \\ 0 & 1 & 0 & 0 & 0 \\ 0 &
	\frac12 & 0 & \frac12 & 0	\\ 0 & 0 & 0 & 1 & 0 \\ 0 & 0 & 0 & 0 &
	1\end{pmatrix} . \nonumber 
\end{align}
The product matrix $P_2 P_1$ has the structure,
\begin{align}
  P = P_2P_1 = \begin{pmatrix} 1 & 0 & 0 & 0 & 0 \\ \frac23 & 0 & \frac13 & 0 &
    0 \\[1mm]
    \frac13 & 0 & \frac13 & 0 & \frac13 \\[1mm] 0 & 0 & \frac13 & 0 & \frac23 \\ 
  0 & 0 & 0 & 0 & 1\end{pmatrix}.
\end{align}
The matrices $P_1$ and $P_2$ are observed to be row stochastic with non-negative
entries. Hence, we can use the Perron-Frobenius theory
\cite{PFTheory:2005,Gallager:2012} to determine the fixed point of the product
matrix $P_2P_1$. The fixed point determined from $P_2P_1$, corresponds the
optimal solution which conforms with the optimal levels obtained using the
Lloyd-Max algorithm. The above result is realized on repeated application of the
update rule \eqref{eq:unif_mat_update}. The quantization level at the $n$-th
stage is given by update, $\vec{q}^{\;(n)} = (P_2P_1)^{n} \vec{q}^{\;(0)}$.  As $n
\rightarrow \infty$, $(P_2P_1)^n$ converges to a rank 2 matrix with two
non-zero columns. It will be later shown that these non-zero column vectors
corresponds to the fixed points of $P_2P_1$. A unique optima is obtained upon
imposing an ordering on the quantization levels. We show the following
properties for the matrix of interest, $P$.

\begin{enumerate} 
  \item $P$ is row stochastic.
  \item Eigenvalues of $P$ satisfy $|\lambda| \leq 1$.
  \item $\lambda =1$ is an eigenvalue and $\mathbf{1} = [1,1,\cdots,1]^T$ is a
    corresponding eigenvector.
  \item All eigenvectors of $P$ are either symmetric or antisymmetric.
  \item The geometric multiplicity of $\lambda = 1$ is 2; ie there are 2
    eigenvectors corresponding to the eigenvalue $1$.
  \item If $\vec{v}_1$ is an eigenvector of $\lambda = 1$, then
    $\vec{v}_2 = \mathbf{1}-\vec{v}_1$ is an independent eigenvector of $\lambda
    = 1$.
\end{enumerate}
The proof of the above results are discussed in
Appendix~\ref{appdx:unif_mat_prop}. Using these properties we can show that there
exists a fixed point such that $\vec{q}_{\opt} = \lim_{n \rightarrow \infty}
\vec{q}^{\;(n)}$ and $P\vec{q}_{\opt} = \vec{q}_{\opt}$. The proposed method is
noted for its exponential rate of convergence. For the example above, the decay rate is
$O\left(\frac{1}{3^n}\right)$, as the second largest eigenvalue of $P$ is $\frac13$. 

For an envelope constrained uniform quantizer, the update rules are similar to
the MSE quantizer, except for a level shift. In the following section we will
discuss the envelope quantization algorithm for continuous probability density,
based on insights derived from the quantization of uniform distributions.

\subsection{AEQ cost minimization and level updates}\label{subsec:AEQ_opt_lev}
Let the mean square error cost with the envelope constraint imposed be denoted
as $\mathcal{R}(Q)$. The following splitting of terms is possible on the terms
of $\mathcal{R}(Q)$:
\begin{align}
  \mathcal{R}(Q) &:= \eE\left[(Q(X)-X)^2\right] \quad \text{where } Q(X) \geq X,
  \nonumber\\
  & = \int_0^{1} \left(Q(x)-x\right)^2 f_X(x) \mbox{d}x \quad \text{where } Q(x)
  \geq x,
  \nonumber \\
  & = \sum_{k=1}^{K} \int_{q_{k-1}}^{q_k} \left(q_k-x\right)^2 f_X(x) \mbox{d}x
  . \label{eq:cost_env}
\end{align}
The simplification in the cost function above is performed by substituting $Q(x)
= q_k$ for $q_{k-1} \leq x < q_k$. It is observed that the total cost can be
minimized with respect to the each quantization level $q_k; \;k =1,\cdots, K$.
The minima corresponds to equating the partial derivative to zero.  That is,
\begin{align}
  0 = & \frac{\partial \mathcal{R}(Q)}{\partial q_k} \nonumber \\
  = & \frac{\partial}{\partial q_k}  \int_{q_{k-1}}^{q_k} (q_k-x)^2 f_X(x)
  \mbox{d}x \nonumber \\ & + \frac{\partial}{\partial q_k}
  \int_{q_{k}}^{q_{k+1}} (q_{k+1}-x)^2 f_X(x) \mbox{d}x \nonumber\\ 
  = &  \int_{q_{k-1}}^{q_k} 2(q_k-x)f_X(x) \mbox{d}x - (q_{k+1}-q_k)^2f_X(q_k)
  \label{eq:quant_level_cond}
\end{align}
In the above equation, it is noted that corresponding to $q_k$, the nearest
neighbor levels $q_{k-1}$ and $q_{k+1}$ are fixed. This implies that the
quantizer level updates can be performed simultaneously for all even (or odd)
indices, while fixing the odd (or even) indices. Since the
modified quantization levels can be determined by two separate parallel updates
of even and odd sets, we term this procedure as \textit{Alternating Between
Evens and Odds (ABEO)}.  This update rule will considerably speed up the
proposed quantization approach.  It is observed that, in general
\eqref{eq:quant_level_cond} does not ensure a closed-form solution of $q_k$.
Hence we provide a linear approximation based algorithm for envelope
quantization. This is considered in the next section.

\subsubsection{Linear Approximation Based
Algorithm}\label{subsec:AEQ_approx_den}
The linear approximation described for ALM (in \eqref{eq:lin_app}) helps us to obtain a
closed-form expression for the optimal $q_k$ for the AEQ levels. We rewrite the
sufficient condition for optimality using the approximate density function $f_{\app}(x)$.
\begin{align}
  0 = \int_{q_{k-1}}^{q_k} 2(q_k-x)f_{\app}(x) \mbox{d}x -
  (q_{k+1}-q_k)^2f_{\app}(q_k) \label{eq:linapp_cond}
\end{align}
We substitute \eqref{eq:lin_app} in the above equation, to obtain a third order
polynomial equation, $p(u) = p_0 + p_1 u + p_2 u^2 + p_3 u^3$. The coefficients
$p_j;\;j =0,1,2,3$ depends on the nearest neighbor levels, $q_{k-1}$ and
$q_{k+1}$. We list these coefficients in Table~\ref{tab:AEQ_solve_cubic}.
\begin{table}
  \centering
  \caption{\label{tab:AEQ_solve_cubic}Coefficients of the cubic polynomial
  equation $p(u)= p_0 +p_1 u+p_2 u^2 + p_3 u^3$, to determine optimal
  level updates of AEQ}
  \bgroup \def\arraystretch{1.5} 
  \begin{tabular}{l|c}
    \toprule
    Coeff. &  $1 \leq k \leq K$ \\[1.2ex]
    \midrule
    $p_0$ &  $\begin{aligned}\frac23 m_k q_{k-1}^3 +
    c_k\left(q_{k-1}^2-q_{k+1}^2\right)\end{aligned}$ \\
  $p_1$ & $\begin{aligned}-2c_k(q_{k-1}-q_{k+1}) - m_k(q_{k+1}^2+q_{k-1}^2)
  \end{aligned}$ \\
  $p_2$ & $\begin{aligned} 2 m_k q_{k+1} \end{aligned}$ \\
  $p_3$ & $\begin{aligned}-\frac23 m_k \end{aligned}$ \\
    \bottomrule
  \end{tabular} \egroup
\end{table}

The roots of the cubic equation $p(u)=0$ in the interval $[q_{k-1},q_{k+1}]$,
corresponds to the optimum level update of $q_k$. The existence of
atleast one real root in $[q_{k-1},q_{k+1}]$ is shown in the
Appendix~\ref{subsec:AEQ_exist_roots}. The proposed linear approximation based
quantization scheme is described in a step-wise manner in
Algorithm.~\ref{algo:scalar_quant}.  
\begin{algorithm}\label{algo:scalar_quant}
  \caption{Scalar Envelope Quantizer Algorithm}
  \SetKwInOut{Input}{Input}
  \SetKwInOut{Output}{Output}
  \SetKwInOut{Init}{Initialization}

    \Input{Input distribution $f_X(x)$, $K =$ \# of levels, MaxIter, Threshold}
  \Output{List of quantization levels, $\vec{q}$}
    \Init{$\vec{q}^{\;(0)} = \left[0,\frac{1}{K}, \frac{2}{K}, \cdots, 1\right]$, stop
  condition = False, $i = 0$, dist = 0}

  \While{!stop condition}{
    $\mathcal{Q}_{\odd} = \{q_1, q_3, \cdots, q_{2m+1}\}$ \\
    $\mathcal{Q}_{\even} = \{q_2, q_4, \cdots, q_{2l}\}$ \\
    $\qquad$    \% where $\max\{2l,2m+1\}=K-1$\\ 
    \For{(In Parallel) $q_k$ in $Q_{\odd}$}{
      $|$ \\
      Set linear approximation parameters :
      \begin{align}
	& \mbox{ Slope: } m_k = \frac{f_X(q_{k+1})-f_X(q_{k-1})}{q_{k+1}-q_{k-1}},
	\nonumber\\
	& \mbox{ Intercept: } c_k = f_X(q_{k+1})-m_k q_{k+1}, \text{ and } \nonumber\\  
	& p(u) = p_0 + p_1 u + p_2 u^2 + p_3 u^3  \text{ (see Table
	\ref{tab:AEQ_solve_cubic})} \nonumber
      \end{align}
      $q_{k}^{(i+1)} \leftarrow \{v \in [q_{k-1},q_{k+1}] : p(v)=0 \}$
      $\color{white}{1}\;\;\;\;\; \qquad \color{black}{\text{Note:}
      (\mbox{Im}(r)=0\})}$ \\
    $|$
      }
    \For{(In Parallel) $q_k$ in $\mathcal{Q}_{\even}$}{
      $|$ \\
      Update $q_k$ with steps in $\mathcal{Q}_{\odd}$ loop above \\
      $|$
      }
    $\mbox{dist} \leftarrow \mathcal{R}(\vec{q}^{\;(i+1)})$;   $\quad i \leftarrow
    i+1$

    \If{(dist $<$ Threshold) or (iter $>$ MaxIter)}{
      stop condition = True
    }
  }
\end{algorithm}
For the ALM scheme, the steps of Algorithm.~\ref{algo:scalar_quant} is valid, when the level
modification step in \textit{line~8} is performed using the cubic polynomial
$r(u)$ in Table.~\ref{tab:ALM_solve_cubic} instead of $p(u)$. The ALM and AEQ levels obtained
using the piecewise linear approximation will be shown to be arbitrarily close
to their respective true levels as $K \rightarrow \infty$.
\subsubsection{Asymptotic Optimality of Piecewise Linear
Approximation}\label{subsec:asymp_opt}
Let $\vec{q}^{\;\ast}(.)$ be the optimal quantizer with respect to the true density
function and $\vec{q}_{\mbox{\scriptsize A}}^{\;\ast}(.)$ be the quantizer obtained by the linear approximation
of the density $f_X(x)$. The asymptotic convergence (as quantization levels $K
\rightarrow \infty$) of the linear approximation scheme can be established by
using the Taylor series expansion. At $x= q_k$, the Taylor approximation around
the interval $x \in [q_k-\delta/2,q_k+\delta/2]$ and $\delta >0$, is given by
\begin{align} 
  f_X(x) & = f_X(q_k) + f_X'(q_k)(x-q_k) + O((x-q_k)^2) \nonumber \\ 
  & = m_k x + c_k + O\left((x-q_k)^2\right)  \nonumber \\ 
  & = f_{\app}(x) + O\left((x-q_k)^2\right).  
  \label{eq:Taylor_app} 
\end{align}
For the simplifying the notations in the analysis, we restrict our attention to
the value of $k=2$. The two neighboring levels of interest are $q_1$ and $q_3$.
Let $q_2^{\ast}$ and $q_{2\mbox{\scriptsize A}}^{\ast}$ denote the optimal level
updates of the true density and the approximated density respectively (see
\eqref{eq:quant_level_cond} and \eqref{eq:linapp_cond}). Then, the Taylor expansion at
$x = q_2^{\ast}$ is,
\begin{align} f_X(q_2^{\ast}) = f_{\app}(q_2^{\ast}) +
  O(\varepsilon_K), \label{eq:Taylor_app2}
\end{align}
where $\varepsilon_K = \max_{1 \leq k \leq K-1} |q_{k+1} - q_{k-1}|^2$. Using
the above fact, $|q_{2}^* - x|^2 \leq |q_{3}-q_1|^2 \leq \varepsilon_K$ for all $x
\in (q_1,q_3)$. For the specific example of a uniformly distributed source,
$\varepsilon_K = \frac{1}{K^2}$. The asymptotic optimality of the ALM and AEQ
schemes, as $K \rightarrow \infty$ is summarized in the following result.
\begin{theorem}[\textit{Asymptotic optimality of ALM and AEQ}]
  The approximate solution of the quantization level update (see
  Table.~\ref{tab:ALM_solve_cubic} for ALM and Table.~\ref{tab:AEQ_solve_cubic}
  for AEQ), $q_{2\mbox{\scriptsize A}}^{\ast}$ converges to the true solution,
  $q_2^{\ast}$ as the number of levels $K \rightarrow \infty$. That is, there
  exists a $K \geq K_0$ such that $|q_{2\mbox{\scriptsize A}}^{\ast}-q_2^{\ast}|
  \leq \varepsilon$ for all $\varepsilon > 0$.
\end{theorem}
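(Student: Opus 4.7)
The plan is to view the defining integral equations for $q_2^{\ast}$ and $q_{2\mbox{\scriptsize A}}^{\ast}$ as the roots of scalar functions $G(q)=0$ and $G_{\app}(q)=0$, where $G(q)$ is the right-hand side of the optimality condition \eqref{eq:ALM_opt_cond} (for ALM) or \eqref{eq:quant_level_cond} (for AEQ) with $f_X$ inserted, and $G_{\app}(q)$ is the same expression after substituting $f_{\app}$. Since the neighbors $q_1$ and $q_3$ are held fixed during the update of $q_2$, both $G$ and $G_{\app}$ are well-defined functions of $q$ alone on $[q_1,q_3]$. The argument then reduces to the standard implicit-function style bound: a common root is forced to be close whenever $G$ and $G_{\app}$ are uniformly close \emph{and} $|G'|$ is bounded below on $[q_1,q_3]$.

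The first step is a uniform bound on $|G(q)-G_{\app}(q)|$ for $q\in[q_1,q_3]$. Using the Taylor bound already recorded in \eqref{eq:Taylor_app2}, i.e.\ $|f_X(x)-f_{\app}(x)|=O(\varepsilon_K)$ for $x\in[q_1,q_3]$, together with $(q_3-q_1)^2\le\varepsilon_K$, I plug the density difference into the integrals and estimate
\[
|G(q)-G_{\app}(q)|\le C\,\varepsilon_K\!\!\int_{q_1}^{q_3}|q-x|\,dx + (q_3-q)^2\,|f_X(q)-f_{\app}(q)|\le C'\,\varepsilon_K^{\,2},
\]
uniformly in $q$; the first summand handles the ALM integrand and the integral part of the AEQ cost, while the second (only present for AEQ) is absorbed by the same factor since $(q_3-q)^2\le\varepsilon_K$.

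The second step is a lower bound on $|G'|$. For ALM, differentiating $G(q)=\int_{b_2(q)}^{b_3(q)}(q-x)f_X(x)\,dx$ by Leibniz yields
\[
G'(q)=\int_{b_2(q)}^{b_3(q)}f_X(x)\,dx+\frac{q-q_3}{4}\,f_X(b_3)-\frac{q-q_1}{4}\,f_X(b_2),
\]
in which the bin-mass term dominates the two signed boundary pieces (their combined magnitude is at most half of the integral when $f_X$ is nearly constant on the bin, which the slope bound \eqref{eq:slope_condition} ensures for large $K$). The AEQ case is analogous after differentiating the extra term $(q_3-q)^2 f_X(q)$, whose contribution is controlled by \eqref{eq:slope_condition}. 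Consequently $|G'(q)|\ge \mu_K>0$ for a quantity $\mu_K$ comparable to $(q_3-q_1)\cdot\min_{[q_1,q_3]}f_X$.

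Combining the two estimates by the mean value theorem, and using $G(q_2^{\ast})=G_{\app}(q_{2\mbox{\scriptsize A}}^{\ast})=0$,
\[
|q_{2\mbox{\scriptsize A}}^{\ast}-q_2^{\ast}|\le\frac{|G(q_{2\mbox{\scriptsize A}}^{\ast})-G_{\app}(q_{2\mbox{\scriptsize A}}^{\ast})|}{|G'(\xi)|}\le\frac{C'\varepsilon_K^{\,2}}{\mu_K}\;\longrightarrow\;0\quad\text{as }K\to\infty,
\]
which proves the theorem. The main technical obstacle is precisely the lower bound $\mu_K$: one must verify that the bin mass does not decay faster than $\varepsilon_K^{\,2}$, so that the ratio above is genuinely $o(1)$. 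Under the smoothness hypothesis \eqref{eq:slope_condition} and continuity of $f_X$ on $[0,1]$, this reduces to controlling $f_X$ away from zero on the bin, which for $K$ large holds on every bin lying inside the support of $f_X$; the boundary bins $k=1$ and $k=K$ require only the minor modification that the cubic in Table~\ref{tab:ALM_solve_cubic} replaces the quadratic, but the same Taylor estimate and monotonicity argument go through verbatim.
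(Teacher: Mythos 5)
Your proposal follows essentially the same route as the paper: both bound the residual of the approximated optimality condition at (or near) the true root by $O(\varepsilon_K^2)$ and then transfer this to closeness of the roots via positivity of the derivative of the optimality function (equivalently, positive second derivative of the local cost, which the paper invokes for ALM and proves for AEQ in Appendix~\ref{appdx:AEQ_pos_derivative}). Your version is the more careful one --- where the paper concludes ``by continuity,'' you supply the mean value theorem, the explicit lower bound $\mu_K$ on $|G'|$ of order $(q_3-q_1)\min_{[q_1,q_3]}f_X$, and the verification that $\varepsilon_K^2/\mu_K\to 0$, which is precisely the quantitative step the paper's proof leaves implicit.
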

\begin{proof} 
  The proofs for ALM and AEQ schemes are dealt separately below.

  \textit{ALM Optimality} : Consider the level update expression in
  \eqref{eq:ALM_opt_cond} evaluated at the true solution $q_2^{\ast}$. Using the
  fact that $b_{2}:=\frac{q_1+q_2}{2}$, the expressions corresponding to the
  true density, $f_X(x)$ and its approximation, $f_{\app}(x)$ are given by,
  \begin{align}
    0 = 2 \int_{b_2}^{b_3} (q_2^{\;\ast}-x)f_X(x) \mbox{d}x &  \label{eq:AEQ_asymp_cond1} \\
    D = 2 \int_{b_2}^{b_3} (q_2^{\;\ast}-x)f_{\app}(x) \mbox{d}x & \label{eq:AEQ_asymp_cond2}
  \end{align}
  On subtracting the \eqref{eq:AEQ_asymp_cond1} from \eqref{eq:AEQ_asymp_cond2}
  and using the Taylor approximation in \eqref{eq:Taylor_app2}, we show,
  \begin{align}
    D &= \frac14 (q_{3}-q_1) \left(q_2^{\;\ast}-\frac{q_3+q_1}{2}\right)
    O(\varepsilon_K) \nonumber \\
    &= O(\varepsilon_K^2)
  \end{align}
  We use the fact that the local MSE cost function about $q_2$ is continuous
  and has a positive second derivative. By the continuity of the cost function,
  we infer that $q_{2\mbox{\scriptsize A}}^{\;\ast} \rightarrow q_2^{\;\ast}$
  as $K \rightarrow \infty$.
  \textit{AEQ Optimality} : We note that the true solution $q_2^{\ast}$ satisfies the optimality equation
  in  \eqref{eq:quant_level_cond}. On applying $q_2 = q_2^{\ast}$ in
  \eqref{eq:linapp_cond}, the rule is satisfied with an offset value $D$. That
  is, 
  \begin{align}
    \int_{q_{1}}^{q_2^{\ast}} 2(q_2^{\ast}-x) f_X(x) \mbox{d}x - (q_{3}-q_{2}^{\ast})^2
    f_X(q_2^{\ast}) &= 0 \label{eq:asymp_opt_1}\\
    \int_{q_{1}}^{q_2^{\ast}}2(q_2^{\ast}-x) f_{\app}(x) \mbox{d}x - (q_{3}-q_{2}^{\ast})^2
    f_{\app}(q_2^{\ast}) &= D \label{eq:asymp_opt_2}
  \end{align}
  Subtracting \eqref{eq:asymp_opt_1} from \eqref{eq:asymp_opt_2}, we get
  \begin{align}
    D & =   \int_{q_{1}}^{q_2^{\ast}}\left[2(q_2^{\ast}-x)\right]
    O\left(\varepsilon_K\right)
    \mbox{d}x - \left[(q_{3}-q_{2}^{\ast})^2\right] O\left(\varepsilon_K\right)
    \nonumber \\
    & = 2\left(q_3-q_1\right)\left(q_2^{\ast} - \frac{q_1+q_3}{2}\right)
    O\left(\varepsilon_K\right) \nonumber \\
    & = O\left(\varepsilon_K^2\right)
  \end{align}
  The above result shows that the offset $D$ as a function of the optimal
  solution $q_2^{\ast}$ eventually converges to zero. Since the approximate
  solution, $q_{2\mbox{\scriptsize A}}^{\ast}$ is the root of the left hand side
  of \eqref{eq:asymp_opt_2}, we argue that $q_{2\mbox{\scriptsize A}}^{\ast}$
  approaches $q_{2}^{\ast}$ arbitrarily close, as $K \rightarrow \infty$. The
  fact is true since the AEQ optimality condition \eqref{eq:linapp_cond} is
  continuous in $q_2$ and has a positive derivative at $q_{2\mbox{\scriptsize
  A}}^{\ast}$ (see Appendix~\ref{appdx:AEQ_pos_derivative} for proof). The above properties
  ensure that, $q_{2\mbox{\scriptsize A}}^{\ast}  \rightarrow q_2^{\ast}$ as $D
  \rightarrow 0$.
\end{proof}
\begin{remark}[\textit{An alternate bound on ALM and AEQ near optimality result}]
  The absolute difference of the nearly optimal and the true optimal values of
  $q_2$ are bounded by the maximum length of the interval
  $[q_{k-1},q_{k+1}];\;k=1,2,\cdots,K-1$. That is, \[|q_{2\mbox{\scriptsize
  A}}^{\ast} - q_2^{\ast}| \leq |q_3-q_1| \leq \sqrt{\varepsilon_K}.\] This
  gives a loose bound on the near optimality, which follows directly from the
  decreasing interval length with $k$.
\end{remark}
The above result holds true when $k=2$ is replaced by any $1\leq k \leq K-1$.
Hence, we see that the approximated quantization vector
$\vec{q}_{\mbox{\scriptsize A}}^{\;\ast}$
converges to the true quantization vector $\vec{q}^{\;\ast}$ under the
$\ell_\infty$ norm. That is, \[\|\vec{q}_{\mbox{\scriptsize A}}^{\;\ast}-
\vec{q}^{\;\ast}\|_{\infty}:= \max_{1 \leq k \leq K} |q_{k\mbox{\scriptsize
A}}^{\ast}-q_{k}^{\ast}| \rightarrow 0\]

In simulations (shown in Fig.~\ref{fig:variation_plots}(g)-(i)) ,
it is observed that the quantization levels obtained from the approximation
schemes are close to the true optima, computed using the original density
function. The ALM and AEQ schemes proposed here achieves a nearly
optimal solution, with a reduced computational burden.

From this point, we treat the analysis of ALM and AEQ algorithms in a common
framework. We abstract out the solution of the cost minimization procedure, that
is the roots of the polynomials in Table~\ref{tab:ALM_solve_cubic}
and Table~\ref{tab:AEQ_solve_cubic}, and express the resulting levels shifts as
linear transformations.
\subsection{Level shifts as linear updates}

The optimal solution for the iterative update of level $q_k$ is given by the
roots of \eqref{eq:ALM_opt_cond} or \eqref{eq:linapp_cond} in the interval
$[q_{k-1},q_{k+1}]$. The solution at $i$-th iteration can be expressed as a
convex combination,
\begin{align}
  q_{k}^{(i+1)} = \theta_k^{(i)} q_{k-1}^{(i)} + (1-\theta_k^{(i}) q_{k+1}^{(i)}
  \quad \text{ where } \theta_{k}^{(i)} \in [0,1]
  \label{eq:cvx_comb}.
\end{align}
The above update equation will aid in the convergence analysis of the proposed
algorithm. In vector notation the ABEO update rule can be expressed
as,
\begin{align}
  \vec{q}^{\;(i+1)} = P_{\odd}^{(i)} P_{\even}^{(i)} \vec{q}^{\;(i)} \quad \text{ where }
  i=0,1,\ldots. \label{eq:vec_update}
\end{align}
In the above equation $P_{\even}^{(i)}$ and $P_{\odd}^{(i)}$ are square matrices
having dimension $K'+1$. Note that for the ALM scheme $K'=K+1$ and for AEQ
$K'=K$. These square matrices determines the optimal level updates obtained
using \eqref{eq:cvx_comb}. For instance, for $K' = 4$,
\begin{align}
  P_{\odd}^{(i)} & = \begin{pmatrix}1 & 0 & 0 & 0 & 0 \\ \theta_1^{(i)} & 0 & 1-
    \theta_1^{(i)} & 0 & 0 \\ 0 & 0 & 1 & 0 & 0 \\ 0 & 0 & \theta_3^{(i)} & 0 &
  1-\theta_3^{(i)}\\ 0 & 0 & 0 & 0 & 1\end{pmatrix}, \nonumber  \\
  P_{\even}^{(i)} & = \begin{pmatrix}1 & 0 & 0 & 0 & 0 \\ 0 & 1 & 0 & 0 & 0 \\ 0
    & \theta_2^{(i)} & 0 & 1-\theta_2^{(i)} & 0	\\ 0 & 0 & 0 & 1 & 0 \\ 0 & 0 &
  0 & 0 & 1\end{pmatrix}.  \label{eq:Podd_mat} 
\end{align}
We note that the two matrices, $P_{\odd}$ and $P_{\even}$ are row stochastic. A
(row) symmetry on the location of the zeros is also observed. The matrix operators
preserve the values of reference levels, $q_0$ and $q_{K}$ in every iteration.
This is attributed to the first and the last rows of the \eqref{eq:Podd_mat} The
vector update of the quantizer explained in
\eqref{eq:vec_update}-\eqref{eq:Podd_mat}, has got the required structure to
apply convergence using the Perron Frobenius theory
\cite{PFTheory:2005,Gallager:2012}.
\section{Convergence Analysis of Near Optimal Quantizers}\label{sec:conv_env_quant}
This section describes the analysis for convergence of the linear approximation
based method in Algorithm \ref{algo:scalar_quant}. Using the fact that,
the product of two row stochastic matrices is row stochastic, we show that
$P^{(i)} := P_{\even}^{(i)} P_{\odd}^{(i)}$ has every row that sums to unity. For
the $K=4$ case, the above product matrix has the following structure:
\begin{align}
  P^{(i)} = \begin{pmatrix} 1 & 0 & 0 & 0 & 0 \\
    \theta_1^{(i)} & 0 & \bar{\theta}_1^{(i)} & 0 & 0 \\
    \theta_2^{(i)} \theta_1^{(i)} & 0 & \bar{\theta}_1^{(i)} \theta_2^{(i)} +
    \bar{\theta}_2^{(i)} \theta_3^{(i)} &
    0 & \bar{\theta}_2^{(i)} \bar{\theta}_3^{(i)} \\
    0 & 0 & \theta_3^{(i)} & 0 & \bar{\theta}_3^{(i)} \\
    0 & 0 & 0 & 0 & 1
  \end{pmatrix}, \label{eq:P}
\end{align}
where $\bar \theta_k^{(i)} = 1 -\theta_k^{(i)}$ is used for
concise notation. Other properties from the individual matrices, such as (row) symmetry
on zero locations, are carried forward to the $P^{(i)}$ matrix. The first and
last rows of the matrix are independent of the scale parameters
$\theta_j;\;\;j=1,\cdots,K-1$. An important observation is regarding the zero
vectors, that appear alternatively in the columns of the above matrix. This
occurs due to the fact that the linear updates $P_{\odd}^{(i)}$ and
$P_{\even}^{(i)}$,  acts only on the alternate entries (nearest neighbors) of
the quantization vector $\vec{q}^{\;(i)}$. An important observation on $P^{(i)}$
is that $0 \geq [P^{(i)}]_{l,m} \leq 1$ for all entries $(l,m)$. Further, we
show that the coefficient of the linear combination $\theta_{j}^{(i)}$ for $j =
1, 2, \cdots, K-1$ are bounded away from the extremes $0$ and $1$. This is shown
in the Fact below.
\begin{prop}[$\theta_j$'s are bounded away from extremes]\label{lem:bdd_away}
  The coefficients $\theta_{j}^{(i)}$ for $j = 1, 2, \cdots, K-1$ in \eqref{eq:P}
  satisfy the criteria $0 <\theta_{j}^{(i)} < 1$ for all iteration count $i$.
\end{prop}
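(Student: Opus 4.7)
The plan is to show that the cubic root which defines $q_k^{(i+1)}$ lies strictly in the open interval $(q_{k-1}^{(i)}, q_{k+1}^{(i)})$; since $q_k^{(i+1)} = \theta_k^{(i)} q_{k-1}^{(i)} + (1-\theta_k^{(i)}) q_{k+1}^{(i)}$, this strict interior location translates directly into $\theta_k^{(i)} \in (0,1)$. Recall that by assumption $f_X$ is continuous and positive on $[0,1]$, and consequently $f_{\app}(x)$, being the linear interpolant through $(q_{k-1}^{(i)}, f_X(q_{k-1}^{(i)}))$ and $(q_{k+1}^{(i)}, f_X(q_{k+1}^{(i)}))$, is strictly positive on the entire interval $[q_{k-1}^{(i)}, q_{k+1}^{(i)}]$. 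This positivity is the one geometric fact that drives everything else.

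I would treat the AEQ case first, since the polynomial $p(u)$ comes from the transparent optimality condition \eqref{eq:linapp_cond}. Define $F(u) := \int_{q_{k-1}^{(i)}}^{u} 2(u-x) f_{\app}(x)\,\mathrm{d}x - (q_{k+1}^{(i)}-u)^2 f_{\app}(u)$, so that $F$ is (up to rescaling) exactly the cubic $p(u)$ whose root in $[q_{k-1}^{(i)}, q_{k+1}^{(i)}]$ is $q_k^{(i+1)}$. Evaluating at the two endpoints:
\begin{align*}
F(q_{k-1}^{(i)}) &= -\bigl(q_{k+1}^{(i)} - q_{k-1}^{(i)}\bigr)^2 f_{\app}(q_{k-1}^{(i)}) < 0, \\
F(q_{k+1}^{(i)}) &= \int_{q_{k-1}^{(i)}}^{q_{k+1}^{(i)}} 2\bigl(q_{k+1}^{(i)}-x\bigr) f_{\app}(x)\,\mathrm{d}x > 0,
\end{align*}
where both strict inequalities use only $f_{\app} > 0$ on the interval and $q_{k-1}^{(i)} < q_{k+1}^{(i)}$. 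By continuity of $F$ and the intermediate value theorem, every root of $F$ in $[q_{k-1}^{(i)}, q_{k+1}^{(i)}]$ lies strictly inside the open interval, so $q_k^{(i+1)} \in (q_{k-1}^{(i)}, q_{k+1}^{(i)})$ and hence $\theta_k^{(i)} \in (0,1)$.

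The ALM case is handled by the analogous argument applied to $G(u) := 2\int_{(q_{k-1}^{(i)}+u)/2}^{(u+q_{k+1}^{(i)})/2} (u-x) f_{\app}(x)\,\mathrm{d}x$, which corresponds to \eqref{eq:ALM_opt_cond} with the linearized density. At $u = q_{k-1}^{(i)}$ the integrand $(q_{k-1}^{(i)}-x) f_{\app}(x)$ is strictly negative on the (nontrivial) integration range $[q_{k-1}^{(i)}, (q_{k-1}^{(i)}+q_{k+1}^{(i)})/2]$, giving $G(q_{k-1}^{(i)}) < 0$; at $u = q_{k+1}^{(i)}$ the integrand is strictly positive on the range $[(q_{k-1}^{(i)}+q_{k+1}^{(i)})/2, q_{k+1}^{(i)}]$, giving $G(q_{k+1}^{(i)}) > 0$. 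IVT again places the root strictly inside. The boundary levels $k = 1$ and $k = K$ require a small cosmetic modification because the neighboring reference levels $q_0$ and $q_{K'}$ are fixed at $0$ and $1$, but the same endpoint-sign computation works verbatim.

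The only potentially delicate point, and the one I expect to be the main obstacle if the density is allowed to touch zero, is preserving strict positivity of $f_{\app}$ on $[q_{k-1}^{(i)}, q_{k+1}^{(i)}]$. This is why the argument relies on the standing assumption that $f_X(x) > 0$ on $[0,1]$; without it, one endpoint evaluation could degenerate to zero and the strict inequality $\theta_k^{(i)} \in (0,1)$ would weaken to $\theta_k^{(i)} \in [0,1]$. Everything else is a routine sign analysis of the integrand at the two endpoints together with IVT.
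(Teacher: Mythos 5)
Your argument is correct and is essentially the paper's own: the strict endpoint signs $F(q_{k-1})<0$, $F(q_{k+1})>0$ (and the analogous signs for $G$) are exactly the computations in Appendix~\ref{appdx:exist_root}, and the paper's main-text proof merely repackages the same endpoint analysis as a contradiction at $\theta_k\in\{0,1\}$ after splitting off the trivial $m_k=0$ case. Your remark that strictness ultimately rests on $f_X>0$ over the interval is apt; the paper uses this only implicitly when it asserts, e.g., $p(q_{k-1})=-f_X(q_{k-1})(q_{k+1}-q_{k-1})^2<0$.
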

\begin{proof}
  From the smoothness assumption \eqref{eq:slope_condition}, we deduce the fact
  that the linear approximation slope is bounded, that is, $|m_k| \leq B$.
  First, we consider the case when $m_k =0$. From the ALM and AEQ optimality
  conditions (see \eqref{eq:ALM_opt_cond} and \eqref{eq:linapp_cond}) we observe
  that $q_k = \frac12 q_{k-1} + \frac12 q_{k+1}$. In other words, a flat density
  approximation results in $\theta_k = \frac12$.   
  
  In the second case, when $m_k \neq 0$ and $|m_k| \leq B$, we show that
  $\theta_k \in (0,1)$. To establish the result, we first observe the fact that
  the ALM and AEQ solutions always lie in the interval $[q_{k-1},q_{k+1}]$ (see
  Appndix \ref{subsec:ALM_exist_roots}, \ref{subsec:AEQ_exist_roots}). Thus
  $\theta_k \in [0,1]$. We now consider the boundary cases corresponding to
  $\theta_k=0$ and $\theta_k=1$. These are equivalent to the solution $q_k =
  q_{k+1}$ and $q_k = q_{k-1}$ respectively. It is observed that under these
  extreme cases the ALM and AEQ optimality conditions (see  \eqref{eq:ALM_opt_cond} and
  \eqref{eq:linapp_cond}) are valid only when $q_{k+1} = q_{k-1}$; which
  corresponds to a trivial case. The resulting contradiction, hence shows that
  $\theta_k \in (0,1)$, for all bounded values of the slope $m_k$.
\end{proof}

In accordance with Algorithm~\ref{algo:scalar_quant}, the quantization levels
after $L$ iterations is $\vec{q}^{\;(L)} = \prod_{i=1}^{L} P^{(i)} \vec{q}^{\;(0)}$.
We show the following convergence property of the product $\prod_{i=1}^L
P^{(i)}$ below.
\begin{prop}[Convergence of columns of product matrix]\label{lem:prod_conv}
  The odd columns $\{\vec{c}_k :k=3,5,\cdots,2m+1\}$ of the sequence
  $\prod_{i=1}^L P^{(i)}$ converges to zero as $L \rightarrow \infty$.
\end{prop}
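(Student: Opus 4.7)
The plan is to reduce the proposition to a sub-stochastic matrix iteration on the interior coordinates and then exploit a uniform escape-probability lower bound to establish geometric convergence.

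I would first observe that, because the first and last rows of every $P^{(i)}$ are the identity rows $e_0^T$ and $e_{K'}^T$, they persist in any product, so each interior column $\vec{c}_k$ of $M_L := \prod_{i=1}^L P^{(i)}$ has zero entries in positions $0$ and $K'$ for every $L \ge 1$. Let $B^{(i)}$ denote the $(K'-1) \times (K'-1)$ sub-matrix of $P^{(i)}$ obtained by deleting rows and columns $\{0, K'\}$, and let $\tilde{\vec{c}}_k^{(L)}$ denote the restriction of $\vec{c}_k$ to interior coordinates. Then $\tilde{\vec{c}}_k^{(L)} = B^{(L)} B^{(L-1)} \cdots B^{(1)} \tilde{\vec{c}}_k^{(0)}$, so it suffices to show that this product tends to zero in the $\ell_\infty$ operator norm. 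Each $B^{(i)}$ is non-negative and sub-stochastic, with row sums $1 - P^{(i)}_{l, 0} - P^{(i)}_{l, K'}$ at interior row $l$; from the pentadiagonal coupling in~\eqref{eq:P}, only rows within distance two of the boundary have sums strictly less than $1$ at a single step.

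The critical step is to establish uniform strict sub-stochasticity of the $N$-step product for some fixed $N = O(K')$. By Proposition~\ref{lem:bdd_away} together with the smoothness assumption~\eqref{eq:slope_condition} on $f_X$, the coefficients $\theta_j^{(i)}$ lie in a compact subinterval $[\delta, 1-\delta] \subset (0,1)$ uniformly over $i, j$. Viewing the iteration as a time-varying Markov chain on $\{0, 1, \ldots, K'\}$ with absorbing endpoints, I would construct from any interior state an explicit escape path of length $N \le \lceil K'/2 \rceil$ to $\{0, K'\}$ by chaining together positive transition entries of $P^{(i)}$, each of which is a product of at most two $\theta_j^{(i)}$'s or $(1-\theta_j^{(i)})$'s and hence is bounded below by $\delta^2$. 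Multiplying these contributions along the path yields a uniform lower bound $\eta := \delta^{CK'}$ on the escape probability from any interior row within $N$ steps, so
\[
\bigl\| B^{(L+N)} \cdots B^{(L+1)} \bigr\|_\infty \le 1 - \eta \quad \text{uniformly in } L.
\]

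Iterating this bound block-wise via the sub-multiplicativity of $\|\cdot\|_\infty$ gives
\[
\bigl\|\tilde{\vec{c}}_k^{(L)}\bigr\|_\infty \le (1 - \eta)^{\lfloor L/N \rfloor} \longrightarrow 0,
\]
which, combined with the structural vanishing of the first and last entries of $\vec{c}_k$, yields the claimed convergence at a geometric rate. The main obstacle I anticipate is the path-construction step: deeper interior rows are not contracted by any single $B^{(i)}$, so the argument must carefully chain together $O(K')$ successive $B^{(i)}$'s through the alternating-update structure in~\eqref{eq:P}, and the uniform compact-interval constraint on $\theta_j^{(i)}$ (itself a consequence of~\eqref{eq:slope_condition}) is essential to keep the accumulated escape probability bounded away from zero independently of the iteration count.
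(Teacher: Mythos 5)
Your route is genuinely different from the paper's, and in an important respect it is the stronger one. The paper's proof works column by column on the full matrices: it observes that each interior column's maximum entry strictly decreases under left-multiplication by a row-stochastic $P^{(i)}$ (because the convex weights are strictly positive by Proposition~\ref{lem:bdd_away}), writes the entry after $n$ steps as $\bigl(\prod_{i=1}^n \alpha_i\bigr)\max_s w_{r,s}^{(1)}$ with each $\alpha_i<1$, and then appeals to the monotone convergence theorem to conclude the limit is zero. You instead peel off the two absorbing boundary coordinates, reduce to a product of sub-stochastic interior blocks $B^{(i)}$, and prove a uniform $N$-step contraction $\|B^{(L+N)}\cdots B^{(L+1)}\|_\infty\le 1-\eta$ by chaining escape paths to the boundary. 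What your version buys is precisely the step the paper elides: a bounded monotonically decreasing sequence converges, but not necessarily to zero, and an infinite product $\prod\alpha_i$ with each $\alpha_i<1$ need not vanish unless the $\alpha_i$ are bounded away from $1$ uniformly in $i$. Your block-contraction argument supplies exactly that missing uniformity and yields an explicit geometric rate $(1-\eta)^{\lfloor L/N\rfloor}$, whereas the paper's argument as written only establishes convergence of the column maxima to \emph{some} nonnegative limit.

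The one soft spot in your proposal is the same one you flag yourself: the uniform confinement $\theta_j^{(i)}\in[\delta,1-\delta]$ with $\delta$ independent of the iteration index $i$. Proposition~\ref{lem:bdd_away} as stated only gives the open-interval containment $0<\theta_j^{(i)}<1$ for each fixed $i$; it does not by itself rule out $\theta_j^{(i)}\to 0$ or $\to 1$ along the iteration, and the smoothness assumption~\eqref{eq:slope_condition} bounds the slope of $f_X$ but does not bound $f_X$ away from zero, which is what a quantitative version of the boundary-case argument in that proposition would appear to need. So your proof is conditionally complete: granted the uniform bound (which the paper also tacitly assumes when it treats $\prod\alpha_i\to 0$ as automatic), your argument is airtight and more informative than the paper's; without it, both proofs have the same residual gap, but yours at least isolates it explicitly as the single remaining hypothesis to verify.
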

\begin{proof}
  We consider the first two terms of the product, viz. $P^{(1)}$ and $P^{(2)}$.
  The two matrices are row stochastic. Any element of the product
  $P^{(2)}P^{(1)}$ is an inner product between a row vector of $P^{(2)}$ and a
  column vector of $P^{(1)}$. Let $\vec{v}_r^{(2)}$  and $\vec{u}_s^{(1)}$ be
  the (column vector) representation of the $r$-th row of $P^{(2)}$ and the
  $s$-th column of $P^{(1)}$ respectively. Then, $w_{r,s} =
  [P^{(2)}P^{(1)}]_{r,s}$, the $(r,s)$ entry of the product $P^{(2)}P^{(1)}$, is
  the inner product between $\vec{v}_r^{(2)}$ and $\vec{u}_s^{(1)}$. That is, $w_{r,s} =
  \langle\vec{v}_r^{(2)},\vec{u}_s^{(1)}\rangle$. The following facts hold true
  for $\vec{v}_r^{(2)}$ and $\vec{u}_s^{(1)}$:
  \begin{enumerate}
    \item[(F1)] $\begin{aligned}\max_{1 \leq t \leq K+1}[\vec{u}_s^{(1)}]_t <
    1\end{aligned}$ for every $s \neq 1$ and $n \neq K+1$.
      (See Proposition.~\ref{lem:bdd_away}) 
    \item[(F2)] $\mathbf{1}^T\vec{v}_r^{(2)} = 1$ where $\mathbf{1} =
      [1,1,\cdots,1]^T$. This is true since $P^{(2)}$ is a row stochastic
      matrix.
  \end{enumerate}
  Since the inner product $w_{r,s}$ is a (non-zero) convex combination (from (F2)
  and due to Proposition.~\ref{lem:bdd_away}) of
  components of $\vec{u}_s^{(1)}$, from (F1) we assert that 
  \begin{align}
    0 \leq w_{r,s} < \max_{1 \leq t \leq K+1}  [\vec{u}_s^{(1)}]_t.
    \label{eq:contra_prod_mat}
  \end{align}
  In other words, all the elements of columns $\{c_k : k=3,5,\cdots,2l+1\}$ are
  strictly less than the maximum element in the corresponding columns of
  $P^{(1)}$. Alternatively, we can represent this as a contraction,
  $\begin{aligned}w_{r,s} = \alpha \max_{t}  [\vec{u}_s^{(1)}]_t\end{aligned}$,
    where $\alpha < 1$.
    Using the fact that product $P^{(2)} P^{(1)}$ is row stochastic, we can
    extend the same argument to the product of three matrices, that is
    $P^{(3)}[P^{(2)}P^{(1)}]$. We make use of an induction
  argument to show the property in the limiting case. Let $w_{r,s}^{(n)}$ be the
  $(r,s)$-th element of the product matrix $\prod_{i=1}^n P^{(i)}$. Then using
  \eqref{eq:contra_prod_mat}, we have the contraction of the sequence,
  \begin{align} 
    w_{r,s}^{(n)} = \left(\prod_{i=1}^n \alpha_i\right)  \max_{
      1\leq s \leq K+1} w_{r,s}^{(1)},
  \end{align}
  where $\alpha_i < 1$. From above we observe that, the
  sequence of inner product terms, $\{w_{r,s}^{(i)}: 0 \leq i < \infty\}$ is a
  monotonically decreasing sequence for every $1 \leq r \leq K+1$ and $s =
  3,5,\cdots,2l+1$. Using monotone convergence theorem \cite{Rudin:2006} on the
  above (bounded) sequence, we show that the columns $\vec{c}_3,\vec{c}_5,
  \cdots, \vec{c}_{2l+1}$ of the product sequence $\prod_{i=1}^{L}  P^{(i)}$
  eventually decreases to zero.
\end{proof}
\begin{theorem}[\textit{Convergence to global minima}]
  The iteration in Algorithm.~\ref{algo:scalar_quant} converges to a
  quantization vector,
  \begin{align}
    \vec{q}^{\;\ast} = P^{\ast} \vec{q}^{\;(0)}, 
  \end{align}
  where $P^{\ast} = \lim_{L \rightarrow \infty} \prod_{i=1}^{L} P^{(i)}$, and
  $\vec{q}^{\ast}$ is independent of the initialization $q^{(0)}$ (except for the
  reference levels).  
\end{theorem}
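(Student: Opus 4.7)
The plan is to combine the row-stochastic structure of each $P^{(i)}$ with Proposition~\ref{lem:prod_conv} to deduce both the existence of $P^{\ast}$ and its restricted column support, from which all the claims of the theorem follow. First, each $P^{(i)}$ is row stochastic by construction (coefficients $\theta_j^{(i)} \in (0,1)$ by Proposition~\ref{lem:bdd_away}), so the partial product $A_L := \prod_{i=1}^{L} P^{(i)}$ remains row stochastic. The first and last rows of each $P^{(i)}$ are the standard basis vectors $e_1^T$ and $e_{K'+1}^T$, enforcing invariance of the reference levels $q_0$ and $q_{K'}$; these rows carry over unchanged to $A_L$ for every $L$. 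Furthermore, the ABEO alternation renders certain columns of $P^{(i)}$ identically zero (the even-indexed interior columns visible in the display of~\eqref{eq:P}), and a zero column in the rightmost factor of the product is preserved under left multiplication, hence remains zero in $A_L$.

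Next I would invoke Proposition~\ref{lem:prod_conv} on the complementary odd-indexed interior columns of $A_L$, which decay to the zero vector as $L \to \infty$. Combining the two observations, every interior column of $A_L$ vanishes in the limit. Row stochasticity then forces, for each interior row $k$,
\begin{align*}
  [A_L]_{k,1} + [A_L]_{k,K'+1} \;=\; 1 - \sum_{\text{interior } j}[A_L]_{k,j} \;\longrightarrow\; 1.
\end{align*}
Writing $\vec{q}^{\;(L)} = A_L \vec{q}^{\;(0)}$ and using the interior decay yields $q_k^{(L)} = [A_L]_{k,1}\,q_0 + [A_L]_{k,K'+1}\,q_{K'} + o(1)$. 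If one can show separately that $\vec{q}^{\;(L)}$ converges to some $\vec{q}^{\;\ast}$, then in the nondegenerate case $q_0 \neq q_{K'}$ the pair of linear identities above uniquely determines $[A_L]_{k,1}$ and $[A_L]_{k,K'+1}$ in the limit, forcing $A_L \to P^{\ast}$ entry-wise. The restricted column support of $P^{\ast}$ (columns $1$ and $K'+1$ only) then delivers $\vec{q}^{\;\ast} = P^{\ast}\vec{q}^{\;(0)}$ with each $q_k^{\ast}$ a convex combination of $q_0$ and $q_{K'}$, so the limit depends on $\vec{q}^{\;(0)}$ only through the reference levels.

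The main obstacle is the separate convergence of $\vec{q}^{\;(L)}$, which the above argument takes as input. I would obtain it via a descent argument: the cost $\mathcal{R}(\vec{q}^{\;(L)})$ is non-increasing (each ABEO half-step is a local one-coordinate minimization) and bounded below by zero, hence convergent; compactness of the cube of ordered levels supplies a convergent subsequence, and continuity of the update rule places every subsequential limit inside the fixed-point set of ABEO. Promoting subsequential to full convergence requires uniqueness of the ordered fixed point sharing the reference levels $q_0, q_{K'}$, which is the most delicate step. I would establish it by exploiting the strict interior bounds $\theta_k \in (0,1)$ (Proposition~\ref{lem:bdd_away}) together with the strict local convexity of the cost (positive second derivative at the optimum, as in the asymptotic optimality analysis), which together preclude two distinct ordered solutions of the local optimality equations with the same boundary values.
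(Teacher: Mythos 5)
Your first half tracks the paper's own proof almost exactly: both arguments rest on row-stochasticity of the partial products $A_L=\prod_{i=1}^{L}P^{(i)}$, preservation of the reference rows, the identically-zero even interior columns visible in \eqref{eq:P}, and Proposition~\ref{lem:prod_conv} to kill the odd interior columns, concluding that in the limit each interior level is a convex combination of $q_0$ and $q_{K'}$ alone. Where you genuinely diverge is on the \emph{existence} of the limit $P^{\ast}$. The paper simply asserts that $\lim_{L}A_L$ exists and then reads off its spectral structure (rank $2$, eigenvalue $1$ with geometric multiplicity $2$, the order-reversed pair $\vec{c}_1,\vec{c}_{K+1}$); but Proposition~\ref{lem:prod_conv} only yields that the interior columns vanish and hence that $[A_L]_{k,1}+[A_L]_{k,K'+1}\to 1$ --- it does not give convergence of the two boundary columns separately. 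You correctly isolate this as the missing step and propose to recover it from convergence of the iterates $\vec{q}^{\;(L)}$ via the invertible $2\times 2$ system (using $q_0\neq q_{K'}$); that reduction is clean and is more honest than the paper's treatment. It also does real work toward initialization-independence, which the paper asserts but does not establish: each $P^{(i)}$ depends on $\theta_j^{(i)}$, which depends on $\vec{q}^{\;(i)}$ and hence on $\vec{q}^{\;(0)}$, so $P^{\ast}$ is a priori initialization-dependent, and only a uniqueness statement for the fixed point rescues the claim.

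That said, the two steps you defer are exactly where the difficulty lives, and your sketches of them have gaps. First, the descent claim that each ABEO half-step is a local minimization of $\mathcal{R}$ is not quite right for ALM/AEQ: each half-step minimizes the \emph{approximated} local cost built from $f_{\app}$, and $f_{\app}$ is re-fitted from the current levels at every iteration, so monotone decrease of the true cost $\mathcal{R}(\vec{q}^{\;(L)})$ does not follow without a quantitative comparison of the exact and approximate local minimizers; the $O(\varepsilon_K^2)$ bound of the asymptotic optimality result (Sec.~\ref{subsec:asymp_opt}) is asymptotic in $K$, not a per-iteration guarantee. For ALM there is the further wrinkle that perturbing an odd $q_k$ moves the boundaries $b_k,b_{k+1}$ and hence the cell of the fixed even neighbor, so the simultaneous odd-index minimizations are not over disjoint pieces of the cost. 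Second, strict local convexity of each one-coordinate update gives \emph{local} uniqueness, not global uniqueness of the ordered fixed point of the coupled system of $K$ optimality equations; excluding two distinct ordered fixed points with the same boundary values is precisely the classical obstruction (it is why Lloyd--Max global optimality is usually proved only for log-concave densities), and neither your sketch nor the paper's proof actually supplies that argument.
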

\begin{proof}
  We note that the limiting product matrix $P^{\ast} = \lim_{L \rightarrow \infty}
  \prod_{i=1}^{L}P^{(i)}$ converges to a matrix with columns $\vec{c}_2 =
  \vec{c}_3 = \cdots = \vec{c}_{K} = \mathbf{0}$ (See Proposition.~\ref{lem:prod_conv}).
  The first and last columns, viz. $\vec{c}_1$ and $\vec{c}_{K+1}$, are non-zero
  vectors, as the matrix transformation $P^{(i)}$, preserves the reference levels $q_0$ and
  $q_K$. The structure of $P^{\ast}$ is as follows:
  \begin{align} 
    P^{\ast} = \begin{pmatrix} \kern.6em\vline & \kern.6em\vline  & \cdots &
      \kern.6em\vline &  \kern.6em\vline \\ \vec{c}_1 & 0 & \cdots & 0 &
      \vec{c}_{K+1} \\ \kern.6em\vline & \kern.6em\vline  & \cdots &
      \kern.6em\vline &  \kern.6em\vline \end{pmatrix}
  \end{align}
  Recalling the fact that $P^{\ast}$ is a row stochastic matrix, the all ones
  vector $\mathbf{1}$, is an eigenvector of $P^{\ast}$ corresponding to
  eigenvalue $\lambda =1$. Due to this fact, we can show $\vec{c}_1+
  \vec{c}_{K+1} = \mathbf{1}$. In other words, the elements of $\vec{c}_{1}$ and
  $\vec{c}_{K+1}$ forms a convex combination pair. From the above fact, we also infer
  that, $\vec{c}_{K+1}$ is \textit{order reversed} with respect to vector $\vec{c}_1$.
  The above two column vectors are also linearly independent and correspond to
  the eigenvectors of $\lambda =1$. Using the Gauss elimination method, we can
  establish that the matrix $P^{\ast}$ has a rank of 2. Since there is a repeated
  eigenvalue $\lambda=1$ with geometric multiplicity 2, all the remaining
  eigenvalues of the limiting matrix are zero. Hence, $P^{\ast}$ has two fixed points
  $\vec{c}_{1}$ and $\vec{c}_{K+1}$.

  On imposing an order constraint on the quantization levels $0:=q_0 \leq q_1
  \leq \cdots \leq q_{K} :=1$, we can show that $\vec{c}_{K+1}$ corresponds to
  the global minimizer of the Algorithm.\ref{algo:scalar_quant}. The above
  iterative scheme has an exponential rate of convergence, as all the
  eigenvalues of the transform matrix $P^{(i)}$ satisfy the property that
  absolute value of eigenvalues atmost 1.

  The initialization $\vec{q}^{\;(0)}$ (assuming $q_0^{(0)}=0$ and $q_K^{(0)}=1$), has no
  effect on the fixed point of $P^{\ast}$, as all the intermediate columns of
  the limiting matrix are zero vectors.  However, the transient terms of the
  product $\prod_{i=1}^{L} P^{(i)}$, will depend on the initialized quantization
  levels for small values of $L$. 
\end{proof}

\begin{remark}[\textit{Uniqueness of solution}]
  The optimal quantization vector given by Algorithm.~\ref{algo:scalar_quant} is
  unique upto an ordering.
\end{remark}
\begin{remark}[\textit{Exponential rate of convergence}]
  The linear approximation based algorithm (see
  Algorithm.~\ref{algo:scalar_quant}) achieves exponential rate of
  convergence, that is, the $\ell_2$ gap between the linear approximation solution and
  the true solution drops at a rate $O(\lambda_{(2)}^n)$, where $\lambda_{(2)}$
  represents the second largest eigenvalue of the transform matrix, $P^{(i)}$.
\end{remark}
\section{Simulation Results and Discussion} \label{sec:sim_results}

We present the simulation results of the model based scalar quantization in this
section. The results corresponds to finite support distributions in the interval
$[0,1]$. All simulations were performed using Numpy module in Python 3.5 kernel
with normal computing hardware (Intel i7, 2.2 GHz processor, 8GB RAM). We study
the variations of the MSE with different simulation parameters for the
unconstrained and constrained cases (see Fig.~\ref{fig:variation_plots}). The
density functions considered in the simulation include the Beta distribution,
truncated normal and truncated exponential distributions.

The quantizer evolution plot for the envelope quantizer on the Beta(2,4) is
shown in Fig.~\ref{fig:evolution_env_2_4}. The simulation considers $K=8$
levels, with an equi-spaced initialization. The levels are seen to shift towards
the peak near $x=0.25$, in a asymmetric manner. The reference levels at $x=0$
and $x=1$ are unchanged during the iteration of algorithm. The levels
obtained by this linear approximation algorithm are observed to be close to
optimal levels without the approximation. In Fig.~\ref{fig:variation_plots}
(a), a comparison of MSE variation for the envelope quantizer and the
unconstrained quantizer is shown. The convergence of the iterative algorithm is
shown in Fig.~\ref{fig:variation_plots}(b) for different values of $K$. It is
noted that the convergence for each of the plots happens in approximately $K$
iterations. Fig.~\ref{fig:variation_plots}(c) describes the
MSE performance of the algorithm for different symmetric  Beta distributions. The
MSE decays at exponential rate with the number of quantization levels. The
advantage of using linear approximation scheme is shown in
Fig.~\ref{fig:variation_plots}(d), by comparing the simulation time required to
run the quantization algorithm. In this experiment we used the stopping criteria
for the algorithm to be the iteration until the computed MSE is within $1\%$ of
the optimal MSE. We observe that the ALM has an average of $3.4$x improvement
with respect to the simulation time (using Beta(4,2) distribution).
Fig.~\ref{fig:variation_plots}(e)-(f) shows the MSE variation of the envelope
quantizer for different truncated distributions and asymmetric Beta
distributions respectively. In Fig.~\ref{fig:variation_plots}(g)-(i), the
quantization levels for the different MSE cost metrics is compared. It is
observed that the linear approximation method closely tracks the optimal levels
obtained without using the approximation. The difference between the
unconstrained MSE and the envelope quantizer is explained by the quantization
levels shown in Fig.~\ref{fig:variation_plots}(g) and (h). For the Beta(2,2)
distribution, the unconstrained quantizer results in a symmetric set of
quantization levels, while the envelope quantizer has a right shifted and
asymmetric set of levels. For the asymmetric Beta(2,4) distribution, the
envelope quantizer allots more number of levels around the region where the
density $f_X(x)$ peaks (see Fig.~\ref{fig:variation_plots}(i)). 
\begin{figure}[!hbt]
  \centering
  \includegraphics[scale=0.55]{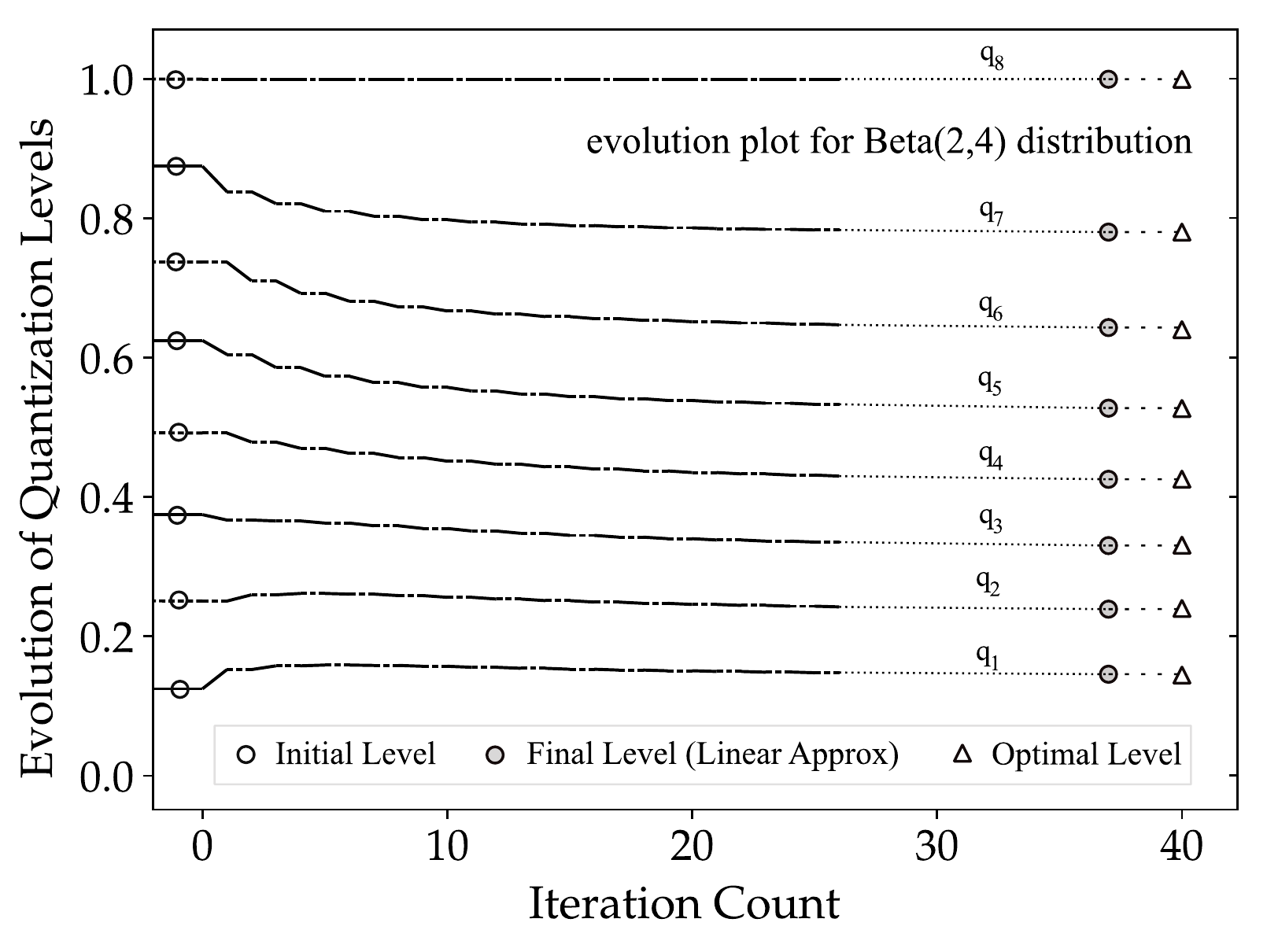}
  \caption{\label{fig:evolution_env_2_4} Evolution of quantization levels for
  the Beta distribution with parameters $\alpha=2$ and $\beta =4$ with a
  equi-spaced initialization. The plot shows the level update according to the
  \emph{Alternating Between Even and Odds (ABEO)} algorithm. The final levels
  due to the linear approximation method are comparable to the optimal (true)
  levels computed without the approximation.} 
\end{figure}
\begin{figure*}
  \centering
  \includegraphics[scale=1.3]{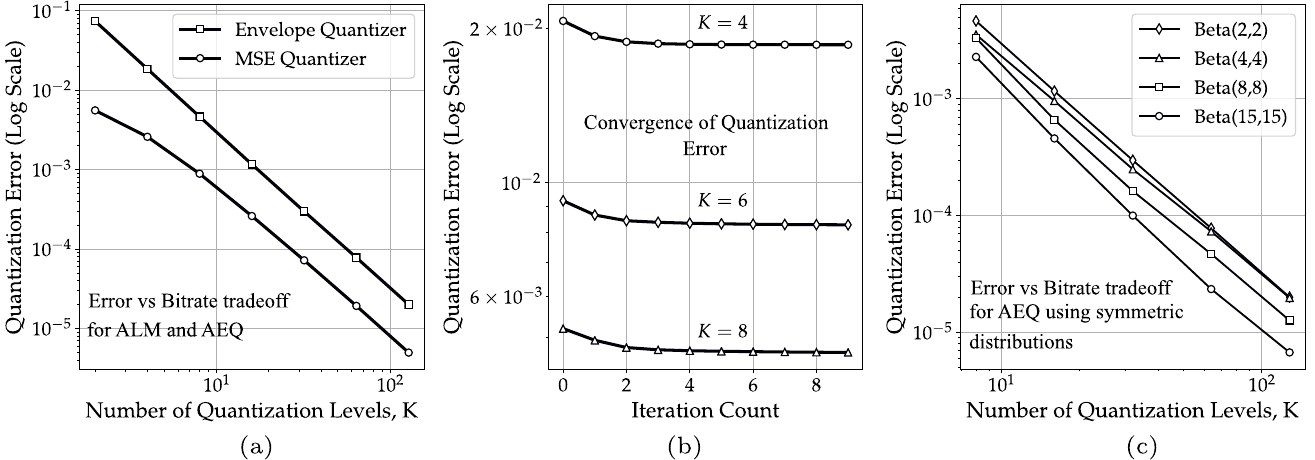}
   \caption{\label{fig:comb_fig1}(a) Quantization error performance
  with the number of quantization levels, $K$ for Approximate Envelope
  Quantizer (AEQ) and Approximate Lloyd-Max (ALM). (b) Convergence of AEQ
 (see Algorithm.~\ref{algo:scalar_quant}) with the increasing
  number of iterations. (c) MSE for envelope quantizer vs number of levels $K$,
  plotted for various symmetric Beta distributions.}
\end{figure*}
\begin{figure*}
  \centering
  \includegraphics[scale=1.3]{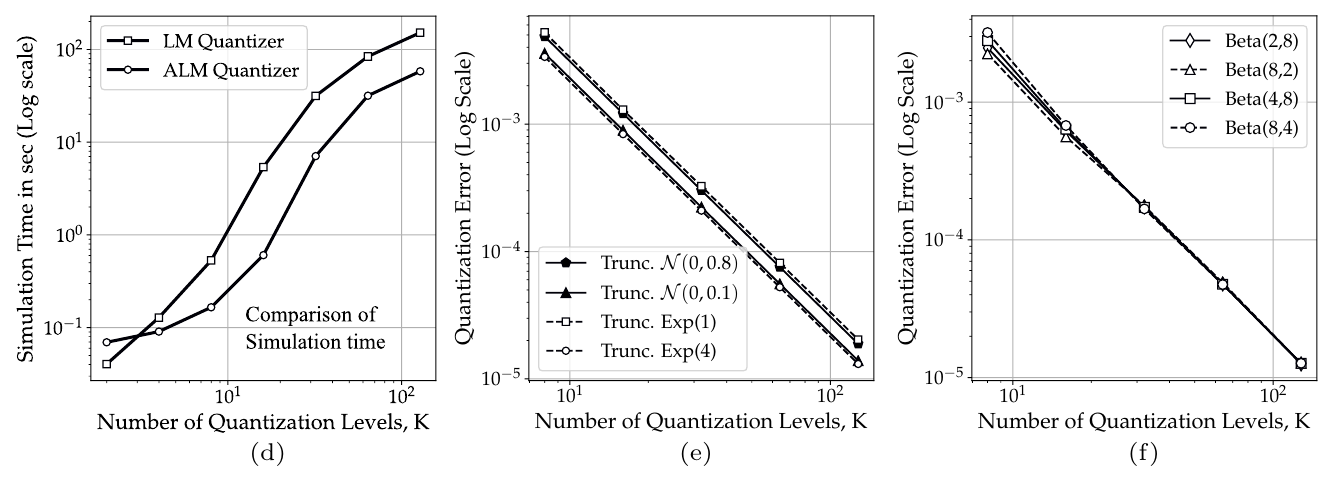}
  \caption{\label{fig:comb_fig2} (d) Simulation time for
  compared for Lloyd-Max quantizer for two cases - with linear approximation and
  without linear approximation. (e) MSE and bitrate tradeoff for AEQ for
  different truncated distributions. (f) Variation of MSE for envelope quantizer
  for different asymmetric Beta distributions.}
\end{figure*}
\begin{figure*}
  \centering
  \includegraphics[scale=1.3]{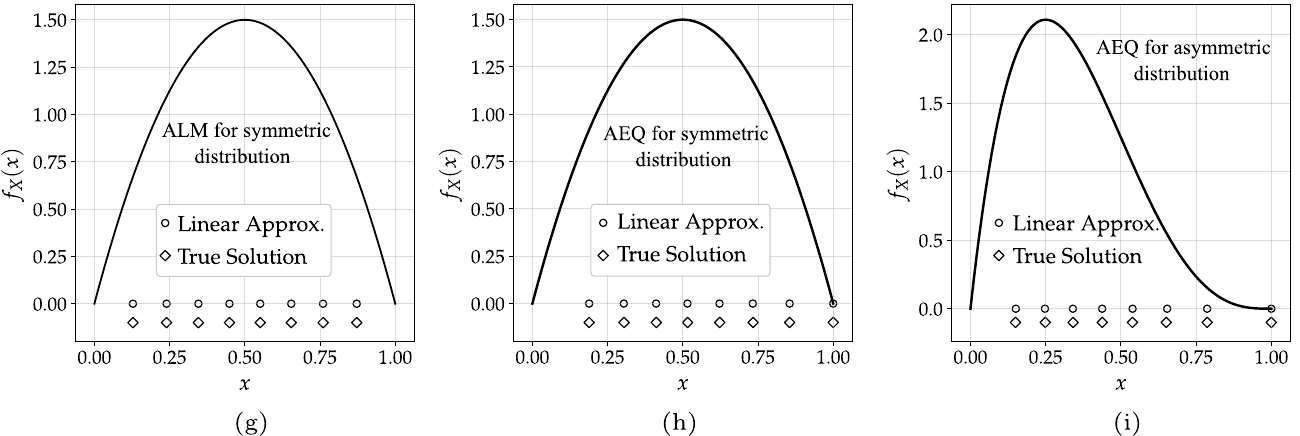}
  \caption{\label{fig:variation_plots} (g) Optimal and Near optimal
  (that is ALM) quantization levels for Lloyd-Max quantizer on Beta(2,2)
  distribution.(h) Optimal and Near optimal (that is AEQ) quantization levels
  for envelope quantizer on Beta(2,2) distribution. (i) Comparison of AEQ near
  optimal scheme with the ground truth optima for the asymmetric Beta(2,4)
  distribution. } 
\end{figure*}
\section{Conclusions} \label{sec:concl}
We have introduced two novel methods for scalar quantization of sources with a
known probability distribution on a finite support. The first quantizer, termed
as ALM, effficiently determines the quantization level updates for the mean
square error cost function. While the second, known as AEQ, is a relatively new
quantization approach, that minimizes an envelope constrained cost function. A
piecewise linear approximation method has been employed here, that significantly
reduces the computational cost of the quantizer. A novel parallel update rule
known as the ABEO, assists in efficiently computing the quantizer levels in a
vectorized manner.  Both our quantization algorithms has been shown to converge
at exponential rate to the unique global minimizer of the cost function. This
convergence result, which is a key contribution of this work, has its roots in
the row-stochastic property associated with the quantization level update
matrices. Simulation results have shown the validity of our analytical results.

\appendices
\section{Proof of properties 1-6 for uniform quantization}
\label{appdx:unif_mat_prop}
\textit{Proof}: 1) Since $P_1$ and $P_2$ are row stochastic, for a fixed $i$
\begin{align}
  \sum_j [P]_{ij} & = \sum_j [P_2 P_1]_{ij} \nonumber \\ 
  & = \sum_j \sum_k [P_2]_{ik} [P_1]_{kj} \nonumber \\
  & = \sum_k [P_2]_{ik} \sum [P_1]_{kj} \nonumber \\ 
  & = \sum_k [P_2]_{ik} \times 1 =1 
\end{align}
This shows $P$ is row stochastic.

2) Consider a vector $\vec{v}$, and 
\begin{align}
  |P\vec{v}|_1 \leq \|P\|_1 \|v\|_1 
\end{align}
Since $\|P\|_1 = 1$, for all eigenvalues $|\lambda| \leq 1$.

3) Follows from 1).

4) $P$ is row symmetric in a cyclic sense, ie $\vec{r}_i = \text{Flip}
(\vec{r}_{n-i+1})$. This follows from row symmetry of individual matrices $P_1$
and $P_2$.  For a row symmetric matrix the eigenvectors satisfy, 
\begin{align}
  P\vec{v}  = \begin{bmatrix} P_u \\ P_l \end{bmatrix} \begin{bmatrix} \vec{v}_u
\\ \vec{v}_l \end{bmatrix} & = \lambda \begin{bmatrix} \vec{v}_u
\\ \vec{v}_l \end{bmatrix} \\
P_u \begin{bmatrix} \vec{v}_u \\ 0 \end{bmatrix}  = \lambda \begin{bmatrix}
  \vec{v}_u \\ 0 \end{bmatrix}, \quad P_l \begin{bmatrix} 0 \\ \vec{v}_l
    \end{bmatrix}  & = \lambda \begin{bmatrix} 0 \\  \vec{v}_l  \end{bmatrix}
\end{align}
By solving the above equations we get $\vec{v}_u = \vec{v}_l$ or $\vec{v}_u = -
\vec{v}_l$.

5) We show that $P-I$ has kernel dimension (or nullity) of $2$. We observe that
the diagonal element $[P-I]_{ii} = 0$ for $i = 2,3,\cdots,n-1$. And using the fact that
each row is shifted with a preceding zero. One can show that all rows are
independent except for the first and the last row. This shows that the geometric
multiplicity is $2$.

6) This follows from 3) and 5).

\section{Existence of a real root for the linear approximation}
\label{appdx:exist_root}
\subsection{Roots corresponding to ALM} \label{subsec:ALM_exist_roots}
We observe that the piecewise linear approximation on the density function
results in the cubic polynomial as described in Table~\ref{tab:ALM_solve_cubic}. Here
we show that there always exists a real root for the polynomial in the interval
$[q_{k-1},q_{k+1}]$. The method employed here uses the fact that a sign change
in the polynomial evaluated between any two points indicates a root between
them. This result is known as the intermediate value theorem. We demonstrate the
steps in the proof, by determining the value of $r(u) = r_0 + r_1 u + r_2 u^2 +
r_3 u^3$ for each of the end points.

\noindent \underline{Case 1:} $2 \leq k \leq K-1$. (Here we note that $r_3= 0$)
\begin{align}
  r(q_{k-1}) & = r_0 + r_1 q_{k-1} + r_2 q_{k-1}^2 \nonumber \\
  & = -\frac18 (q_{k+1}-q_{k-1})^2\left[m_k\left(\frac23 q_{k-1} + \frac13
  q_{k+1}\right)+c\right] \nonumber \\
  & = -\frac18 (q_{k+1}-q_{k-1})^2\left[\frac23 f_X(q_{k-1}) + \frac13
  f_X(q_{k+1})\right] \nonumber \\
  & < 0
\end{align}
At the right boundary, that is, $u = q_{k+1}$,
\begin{align}
  r(q_{k+1}) & = r_0 + r_1 q_{k+1} + r_2 q_{k+1}^2 \nonumber \\
  & = \frac18 (q_{k+1}-q_{k-1})^2\left[\frac23 f_X(q_{k+1}) + \frac13
  f_X(q_{k-1})\right] \nonumber \\
  & > 0
\end{align}
\noindent \underline{Case 2:} $k = 1$ or $K$. We show the proof for $K=1$ and
explain the modifications necessary for $k=K$. The polynomial $r(u)$ evaluated
at $u = q_{0}$ and $u=q_{2}$ simplifies to :
\begin{align}
  r(q_0) &= -\frac12 (q_2-q_0)^2 \left[\frac13 f_X(q_2) + \frac23
  f_X(q_0)\right] \nonumber \\
  & < 0 \\
  r (q_2) &= \frac12 (q_2-q_0)^2 \left[\frac23 f_X(q_2) + \frac13
  f_X(q_0)\right] \nonumber \\
  & > 0.
\end{align}
For the cubic polynomial corresponding to $k=K$, we note that all signs are
reversed with respect to the $k=1$ case. The modified polynomial still evaluates to
give the change in signs at the end points, that is $r(q_{K-1})<0$ and
$r(q_{K'})>0$.
\subsection{Roots corresponding to AEQ}\label{subsec:AEQ_exist_roots}
We show that the polynomial equation $p(u) = p_0 + p_1 u+ p_2 u^2 + p_3 u^3 = 0$, with
coefficients as listed in Table.~\ref{tab:AEQ_solve_cubic}, has atleast one real
root in the interval $[q_{k-1},q_{k+1}]$. Recall that $q_{k-1}$ and $q_{k+1}$
represents the left and right nearest neighbors of the quantization level $q_k$.
We show the above fact using the intermediate value theorem, that is, $p(u)=0$ if
$p(q_{k-1})p(q_{k+1}) < 0$. Evaluating the polynomial at the end points of the
interval we get,
\begin{align}
  p(q_{k-1}) & = p_0 + p_1 q_{k-1} + p_2 q_{k-1}^2 + p_3 q_{k-1}^3 \nonumber \\
  & = -c_k q_{k-1}^2 - c_k q_{k+1}^2 + 2c_k q_{k-1} q_{k+1} \nonumber \\ & \;\;\;\; -m_k q_{k-1}^3 -m_k  q_{k-1}q_{k+1}^2 +2 m_k q_{k-1}^2 q_{k+1} \nonumber \\
  & = -(c_k+m_k q_{k-1}) (q_{k+1}-q_{k-1})^2 \nonumber \\
  & = -f_X(q_{k-1}) (q_{k+1}-q_{k-1})^2 \nonumber \\
  & < 0,
\end{align}
and
\begin{align}
  p(q_{k+1}) & = p_0 + p_1 q_{k-1} + p_2 q_{k-1}^2 + p_3 q_{k-1}^3 \nonumber \\
  & = c_k q_{k-1}^2 + c_k q_{k+1}^2 - 2c_k q_{k-1} q_{k+1} \nonumber \\ &
  \;\;\;\; + \frac23 m_k q_{k-1}^3 + \frac13 m_k  q_{k+1}^3 - m_k q_{k-1}^2 q_{k+1} \nonumber \\
  & = -p(q_{k-1}) + \frac13 m_k (q_{k+1}^3-q_{k-1}^3) \nonumber \\ &\;\;\;\; - m_k q_{k+1} q_{k-1}
  (q_{k+1}-q_{k-1}) \nonumber \\
  & = (\frac23 f_X(q_{k-1})+\frac13 f_X(q_{k+1}))(q_{k+1}-q_{k-1})^2 \nonumber \\
  & > 0.
\end{align}
From the above two inequalities we observe that the product
$p(q_{k-1})p(q_{k+1}))$ is always negative and hence there always exist a root of
$p(u)=0$ in the interval $[q_{k-1},q_{k+1}]$.
\section{Proof that AEQ optimality condition results in a positive
derivative}\label{appdx:AEQ_pos_derivative}

In this section we show that the AEQ optimality condition defined by the
polynomial $p(u)$, in Table.~\ref{tab:AEQ_solve_cubic} has a positive slope. The
proof for the same follows from the convexity of the cost function
\eqref{eq:cost_env}. The derivative of the polynomial $p(q_k)$ with respect to
$q_k$ is given as,
\begin{align}
  \frac{d p(q_k)}{d q_k} & =
  2(q_{k+1}-q_{k-1})\left(m_k\left(\frac{q_{k+1}+q_{k-1}}{2}\right)+c_k\right) &
  \nonumber \\ & \qquad \qquad -2m_k(q_{k+1}-q_{k})^2  \nonumber \\
  & = 2 (q_{k+1}-q_{k-1}) f_{\app}\left(\frac{q_{k+1}+q_{k-1}}{2}\right) 
  \nonumber \\ & \qquad \qquad -2m_k(q_{k+1}-q_{k})^2  \label{eq:p_derivative}
\end{align}
We consider the following three cases - $m_k=0$, $m_k <0$ and $m_k>0$. In the
first and second case, we see that the derivative is positive since
$f_{\app}\left(\frac{q_{k+1}+q_{k-1}}{2}\right) >0$. When $m_k >0$, we use the
fact that, the optimal solution $q_k$ is closer to $q_{k+1}$ than $q_{k-1}$.
In other words, we get the condition $q_{k+1}-q_{k} \leq q_{k} - q_{k-1}$. Using the
above fact, we rewrite \eqref{eq:p_derivative} as,
\begin{align}
  \frac{d p(q_k)}{d q_k} & \geq 2(q_{k+1}-q_{k-1})
  \left[m_k\left(\frac{2q_k-q_{k+1}+q_{k-1}}{2}\right) +c_k\right]
  \nonumber \\
  & \geq 2(q_{k+1}-q_{k-1}) f_{\app}(q_{k-1}) >0
\end{align}

\bibliographystyle{IEEEtran}
\bibliography{refs}



\end{document}